\documentclass[sigconf]{acmart}
\AtBeginDocument{%
  }
\usepackage{newunicodechar}
\newunicodechar{ }{\,}
\usepackage[utf8]{inputenc} % allow utf-8 input
\usepackage[T1]{fontenc}    % use 8-bit T1 fonts
\usepackage{hyperref}       % hyperlinks
\usepackage{url}            % simple URL typesetting
\usepackage{booktabs}       % professional-quality tables
\usepackage{amsfonts}       % blackboard math symbols
\usepackage{nicefrac}       % compact symbols for 1/2, etc.
\usepackage{microtype}      % microtypography
\usepackage{xcolor}         % colors
\usepackage{amsmath}
\usepackage{graphicx}
\usepackage{float}    % required for [H]
\copyrightyear{2026}
\acmYear{2026}
\setcopyright{cc}
\setcctype{by}
\acmConference[KDD '26]{Proceedings of the 32nd ACM SIGKDD Conference on Knowledge Discovery and Data Mining V.2}{August 09--13, 2026}{Jeju Island, Republic of Korea}
\acmBooktitle{Proceedings of the 32nd ACM SIGKDD Conference on Knowledge Discovery and Data Mining V.2 (KDD '26), August 09--13, 2026, Jeju Island, Republic of Korea}
\acmDOI{10.1145/3770855.3818872}
\acmISBN{979-8-4007-2259-2/2026/08}

\begin{document}
\title{How Far Can You Grow? Characterizing the Extrapolation Frontier of Graph Generative Models for Materials Science}
\author{Can Polat}
\email{can.polat@tamu.edu}
\affiliation{%
  \institution{Texas A\&M University}
  \city{College Station}
  \state{Texas}
  \country{USA}
}

\author{Erchin Serpedin}
\email{eserpedin@tamu.edu}
\affiliation{%
  \institution{Texas A\&M University}
  \city{College Station}
  \state{Texas}
  \country{USA}
}

\author{Mustafa Kurban}
\authornote{Corresponding authors.}
\email{kurbanm@ankara.edu.tr}
\affiliation{%
  \institution{Ankara University}
  \city{Ankara}
  \country{Turkey}
}
\affiliation{%
  \institution{Texas A\&M University at Qatar}
  \city{Doha}
  \country{Qatar}
}

\author{Hasan Kurban}
\authornotemark[1]
\email{hkurban@hbku.edu.qa}
\affiliation{%
  \institution{College of Science \& Engineering\\ Hamad Bin Khalifa University}
  \city{Doha}
  \country{Qatar}
}

\renewcommand{\shortauthors}{Polat et al.}

\begin{teaserfigure}
  \includegraphics[width=\textwidth]{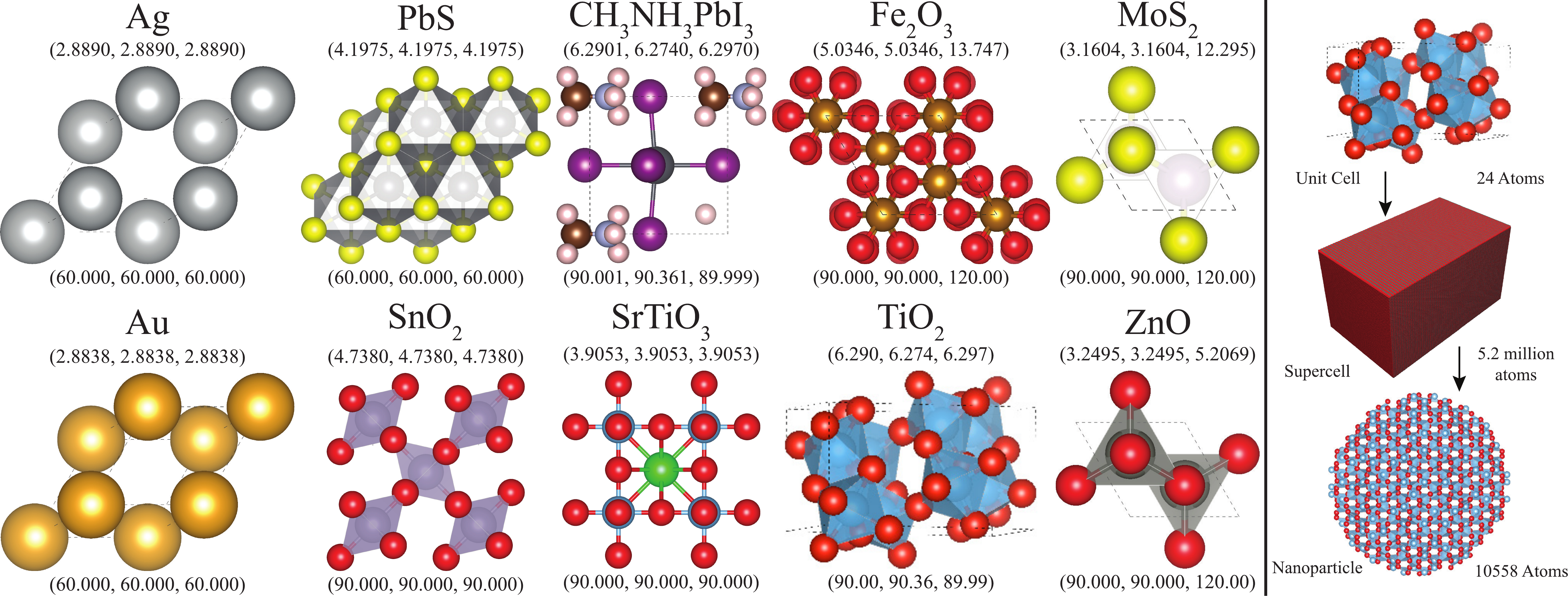}
  \Description{Teaser overview}
  \caption{Detailed overview of RADII. (left) Unit cells of the materials (Ag, Au, CH$_3$NH$_3$PbI$_3$, Fe$_2$O$_3$, MoS$_2$, PbS, SnO$_2$, SrTiO$_3$, TiO$_2$, and ZnO) arranged left to right. The lattice constants a, b and c denote the cell edge lengths, while the angles $\alpha$, $\beta$ and $\gamma$ specify the inter-edge angles between b–c, a–c and a–b, respectively. (right) Workflow for generating radius-resolved nanoparticles.}
  \label{fig:datasetProcess}
\end{teaserfigure}
%%
%% The abstract is a short summary of the work to be presented in the
%% article.
\begin{abstract}
Every generative model for crystalline materials harbors a critical structure size beyond which its outputs quietly become unreliable; we call this the \textit{extrapolation frontier}. Despite its direct consequences for nanomaterial design, this frontier has never been systematically measured. We introduce RADII, a radius-resolved benchmark of ${\sim}$75,000 crystal-derived nanoparticle structures (33--11,298 atoms) that treats radius as a continuous scaling knob to trace generation quality from in-distribution to out-of-distribution regimes under leakage-free splits. Each model is conditioned on the target composition and atom count, isolating geometric extrapolation as the evaluation variable. RADII provides frontier-specific diagnostics: per-radius error profiles pinpoint each architecture's scaling ceiling, surface--interior decomposition tests whether failures originate at boundaries or in bulk, and cross-metric failure sequencing reveals which aspect of structural fidelity breaks first. Benchmarking five state-of-the-art architectures, we find that: (i) well-behaved models degrade by ${\sim}13\%$ in global positional error beyond training radii, while divergent models exhibit poor absolute fidelity across scales; local bond fidelity varies sharply across architectures, from negligible degradation to more than $2\times$ error growth;
(ii) no two architectures share the same failure sequence, revealing the frontier as a multi-dimensional surface shaped by model family; and (iii) well-behaved models conform to the expected geometric scaling exponent $\alpha \approx 1/3$ whose in-distribution fit accurately predicts out-of-distribution error, making their frontiers quantitatively forecastable. Scaling MatterGen to its published parameter count stabilizes sampling but does not close the extrapolation frontier, while DiffCSP remains unstable even at published scale. These findings establish output scale as a first-class evaluation axis for geometric generative models. The dataset, generation pipeline, and implementations are available at \url{https://github.com/KurbanIntelligenceLab/RADII}.
\end{abstract}

\keywords{Crystal Generation, Benchmark, Quantum Chemistry, Equivariant Architectures, Graph Neural Networks}

\maketitle

\section{Introduction}\label{sec:intro}
Generative models for crystalline materials are routinely evaluated at the scales on which they were trained, creating an illusion of reliability that shatters once output size departs from the training distribution. We refer to the critical size threshold at which this collapse occurs as the \textit{extrapolation frontier}: a quantity that, despite its direct implications for nanomaterial design, has never been systematically measured. Nanostructured materials drive applications from photovoltaics to chemical sensing \cite{simonov2020designing}, with properties governed both by the periodic symmetry of primitive unit cells and by the finite morphologies of nanoparticles \cite{cao2022tio2}. These regimes are conventionally treated separately: crystal simulations propagate ideal lattices, whereas nanoparticle workflows construct finite clusters through empirical or first-principles refinement \cite{levi1997theory}. Both approaches face steep scalability barriers across compositions, sizes, and orientations \cite{surek2005crystal}. Density functional theory (DFT) \cite{orio2009density} and its tight-binding approximation (DFTB) \cite{elstner2014density} deliver accurate energetics, but DFT's cubic scaling restricts exploration to modest system sizes \cite{cohen2008insights}; DFTB reduces this cost yet remains limited for large-scale nanostructure generation \cite{liu2019efficient, qi2013comparison}.

Machine learning offers a scalable alternative \cite{karniadakis2021physics}. Graph neural networks \cite{schnet} achieve strong predictive performance on molecular and crystalline benchmarks \cite{md17, chanussot2021open}. Equivariant architectures \cite{satorras2021n, fuchs2020se, spherenet} improve data efficiency through geometric symmetry constraints, and multimodal methods \cite{pure2dope, rollins2024molprop} incorporate diverse input modalities. Generative models, including symmetry-preserving diffusion and flow approaches \cite{ levy2025symmcd, kelvinius2025wyckoffdiff}, have advanced crystal structure generation considerably. However, existing evaluations focus predominantly on small molecules or bulk periodic crystals at roughly fixed output sizes. This evaluation paradigm conceals a fundamental failure mode: it remains unknown \textit{how far} beyond their training distribution these models can reliably generate structures, and \textit{how} generation quality degrades as output size increases.

RADII addresses this gap by systematically mapping the extrapolation frontier across architectures, materials, and metrics. Using radius as a continuous scaling knob, the benchmark links primitive unit cells extracted from published crystallographic data to nanoparticles across 25 size configurations spanning 6--30~\AA, yielding approximately 75,000 structures containing 33--11,298 atoms. A leakage-free data split cleanly separates orientation interpolation (in-distribution, ID) from radius extrapolation (out-of-distribution, OOD), enabling precise identification of each model's scaling ceiling. Each generation task is conditioned on the target composition and atom count, so RADII isolates geometric extrapolation rather than conflating it with composition prediction or assignment ambiguity. For every model--material pair, RADII traces per-radius error profiles to locate the frontier, decomposes failures into surface versus interior contributions, and quantifies extrapolation gap severity across complementary metrics. Benchmarking state-of-the-art generative models reveals that all architectures degrade sharply, but the frontier location and failure signature vary systematically with material symmetry and architecture family. These findings establish output scale as a first-class evaluation axis for geometric generative models and demonstrate that current scaling limits are predictable rather than random. By providing a reproducible, geometry-grounded diagnostic testbed, RADII lays the foundation for developing architectures that generalize beyond their training horizon.

\section{Related Work}\label{sec:related}

\subsection{Geometric Graph Generation for Materials}

Generative modeling of atomic structures has advanced rapidly, yet nearly all progress has been measured at fixed output scales. Crystalline unit cells encode the symmetry operations and atomic motifs that generative models must reproduce \cite{kittel}, and when bulk crystals are truncated into finite clusters, their morphologies follow orientation-dependent surface energies governed by the Gibbs--Wulff theorem \cite{li2016gibbs, wuffCrystal, wuff2}. These geometric effects produce controlled, symmetry-consistent deviations from ideal bulk structure as system size varies, precisely the kind of structured distribution shift needed to characterize extrapolation frontiers. Quantum-confinement phenomena \cite{bera2010quantum} lie outside scope; RADII retains only the geometric variation needed to probe whether generation quality holds as output size departs from training conditions. By employing deterministic, symmetry-preserving construction rather than simulation-driven morphologies, the benchmark isolates scale as an independent variable for evaluating generative models \cite{yang2022big}.

Graph neural networks \cite{schnet} and equivariant architectures \cite{satorras2021n, fuchs2020se, spherenet} have achieved strong performance on molecular and crystalline benchmarks \cite{md17, chanussot2021open}, with multimodal methods \cite{pure2dope, rollins2024molprop} further expanding input representations. Generative models, including diffusion-based approaches \cite{graphdf, jiao2024space}, flow-matching methods \cite{levy2025symmcd}, and symmetry-aware generators \cite{kelvinius2025wyckoffdiff}, have advanced crystal structure generation considerably. Recent symmetry-aware crystal structure prediction methods such as EquiCSP~\cite{equicsp}  and SGEquiDiff~\cite{sgequidiff} further exploit space-group equivariance and Wyckoff-position priors to improve periodic structure generation. However, these periodic inductive biases assume translational symmetry that is explicitly broken in finite nanoparticles, and RADII's results empirically confirm that architectures designed under such assumptions can fail to scale to non-periodic clusters. More broadly, all these models are developed and validated on datasets where output size is approximately constant (e.g., QM9 at ${\sim}9$ atoms, MP-20 at ${\sim}20$ atoms per cell), meaning that their behavior under size extrapolation remains entirely uncharacterized. Unified geometric representation benchmarks such as Geom3D~\cite{geom3d} have broadened evaluation across tasks and representations but likewise do not vary output scale. RADII is designed to fill exactly this diagnostic gap.

\subsection{Scalability Limits of Physics-Based Methods}

The extrapolation frontier matters in practice because physics-based alternatives cannot cover the size ranges that generative models are increasingly asked to target. Kohn--Sham DFT provides accurate energetics but scales as $\mathcal{O}(N^3)$, restricting routine simulations \cite{bickelhaupt2000kohn, yu2016perspective}. Linear-scaling approaches such as ONETEP \cite{baer1997sparsity, skylaris2005introducing}, semi-empirical techniques like DFTB \cite{zheng2005performance, kim2019multiscale}, and classical or ML-based interatomic potentials \cite{daw1984embedded, mahata2022modified} extend this ceiling but cannot reliably generate structures approaching RADII's 11{,}298-atom upper bound. This computational bottleneck is precisely why ML-based generation is attractive for nanomaterial design, and why understanding where these models break under size extrapolation is urgent. Because RADII targets geometric scaling behavior rather than energetic accuracy, physics-based generation is both unnecessary and infeasible at benchmark scale; instead, deterministic symmetry-preserving construction provides the scalable ground truth needed to map extrapolation frontiers \cite{kurban2024enhancing}.

\subsection{Evaluation Gaps in Existing Benchmarks}
Benchmark datasets have driven geometric deep learning forward, yet none systematically probe size extrapolation. QM7 \cite{blum, rupp}, MD22 \cite{chmiela2023accurate}, PubChemQC \cite{kim2025pubchem}, NablaDFT \cite{khrabrov2022nabladft}, and QH9 \cite{yu2024qh9} target molecular energetics or Hamiltonians; Perov-5 \cite{perov5a, perov5b} and Carbon-24 \cite{pickard2020airss} catalog crystalline frameworks; MatBench \cite{dunn2020benchmarking}, OC20 \cite{chanussot2021open}, OC22 \cite{tran2023open}, LAMBench \cite{lambench}, and CrysMTM \cite{polat2025crysmtm} benchmark property or force predictions on surface and bulk structures. All evaluate at approximately fixed output scale; none treats output size as a continuous evaluation axis, so extrapolation frontiers have never been measured. The broader OOD generalization literature for GNNs, including causality-based and augmentation-based approaches, underscores the importance of controlled distribution shifts; RADII provides exactly such a structured geometric shift and may serve as a complementary testbed for graph OOD methods. Power-law scaling relationships between error and system size are well-studied in neural scaling laws~\cite{kaplan2020scaling} and finite-size scaling in statistical physics~\cite{fisher1972scaling}, but have not been applied to characterize geometric generative models; the $\alpha \approx 1/3$ exponent identified in Section~\ref{subsec:exp_scaling} connects to these traditions by relating generation error to the linear dimension of the structure. Closest to our setting, C2NP~\cite{c2np} formalizes a unit-cell-to-nanoparticle task with spherical truncation and rotation-stratified splits, evaluating multiple architectures and including a reverse (nanoparticle$\,\to\,$unit cell) direction. RADII builds on this foundation with two distinct extensions: (i) frontier-specific diagnostics not present in C2NP (surface--interior decomposition, coordination correlation, cross-metric failure sequencing, and degradation ratios), and (ii) explicit scaling-law fits with OOD residual analysis that make frontiers quantitatively forecastable.

\begin{figure*}[ht]
    \centering   \includegraphics[width=0.9\linewidth] {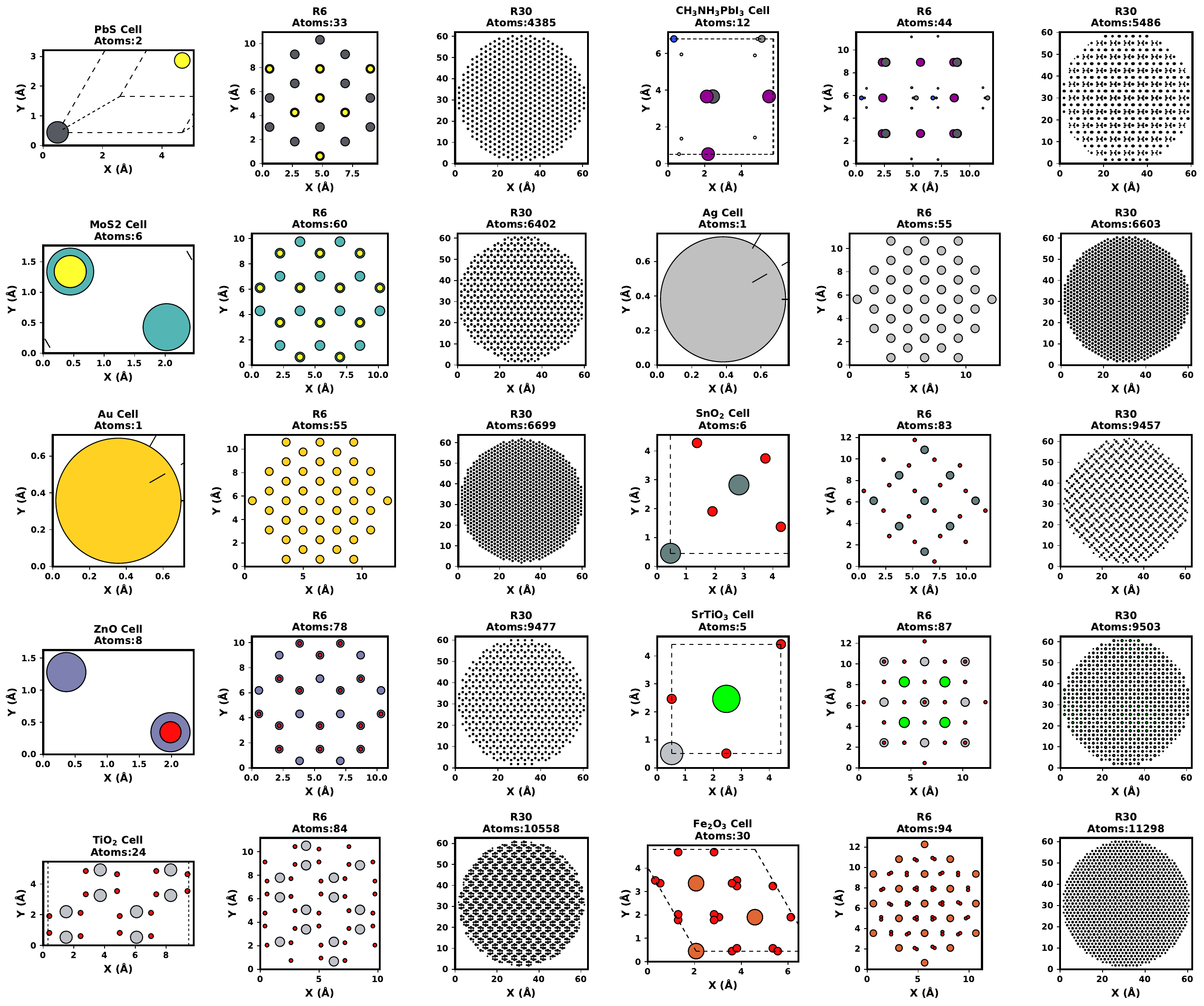}
    \Description{Dataset Overview}
    \caption{From primitive cell to radius‑controlled nanoclusters. For each material in the dataset, the panels show, left to right, the primitive unit cell followed by its canonical $R = 6$~\AA\ and $R = 30$~\AA nanoparticles. Materials are arranged from top to bottom in ascending order of the atom count in their $R{=}30$ cluster, illustrating how coordination environments and bulk‑like cores emerge with increasing radius. All views share a common Ångström scale. Atom colours follow the conventional CPK palette.}
    \label{fig:datasetOverall}
\end{figure*}

\section{RADII Construction}\label{sec:dataset}

\subsection{Task Formulation}
\label{subsec:task}

Given a primitive unit cell $\mathcal{U}_m = (\{\mathbf{b}_j, z_j\}_{j=1}^M, \mathbf{v}_1, \mathbf{v}_2, \mathbf{v}_3)$ encoding basis positions, species, and lattice vectors, a target radius $R$, and the corresponding ground-truth species sequence $\mathbf{z} = (z_1,\dots,z_{N(R)})$, the model generates a finite set of 3D atomic positions:
\begin{equation}
f : (\mathcal{U}_m,\; R,\; \mathbf{z}) \;\longmapsto\; \{\hat{\mathbf{x}}_i\}_{i=1}^{N(R)} \subset \mathbb{R}^3,
\label{eq:task}
\end{equation}
where $\hat{\mathbf{x}}_i \in \mathbb{R}^3$ is the predicted position of atom $i$, the atom count $N(R) = |\mathcal{S}(R)|$ is determined by $R$ via the spherical truncation in Eq.~\eqref{eq:structure}, and atomic species are not predicted but supplied as inputs in $\mathbf{z}$. This conditioning establishes a natural one-to-one correspondence between predicted and reference atoms without requiring Hungarian matching or any other assignment algorithm. We emphasize that this formulation evaluates a \emph{geometry-only} subtask of generative modeling: chemical identities are handled deterministically, and each model receives the exact stoichiometry and species ordering of the target nanoparticle. This design cleanly isolates geometric extrapolation as the variable of interest; extending the benchmark to an unconditioned track where models must additionally predict composition and ordering is a natural direction discussed in Section~\ref{sec:limitation}. Evaluating~\eqref{eq:task} across all 25 radii traces each model's quality from within the training distribution to beyond the extrapolation frontier.

\paragraph*{Why no deterministic reconstruction baseline.}
Since the ground-truth nanoparticle \emph{is} the deterministic spherical truncation of the unit cell (Eq.~\ref{eq:structure}), a rule-based baseline that reconstructs the reference from the input lattice parameters would trivially achieve zero error: the lattice vectors and basis positions fully specify every atom's position. The task evaluates \emph{generation} (producing atomic coordinates from learned representations), not reconstruction from explicit lattice parameters. A supervised coordinate-regression baseline (e.g., an MLP mapping unit-cell features to atom positions) faces the same triviality: with the unit cell, radius, and atom ordering all provided, the mapping is deterministic and learnable to near-zero error given sufficient capacity, providing no diagnostic value for measuring extrapolation. Instead, the ID performance of each generative model serves as its own architecture-specific reference against which OOD degradation is measured. The benchmark question is precisely whether a learned generative model can preserve and extrapolate these structural regularities through its learned representations as output size grows, an $O(N)$ coordinate-placement problem that an explicit lattice rule, but not a learned model, resolves for free.

\subsection{Material Selection and Structure Generation}
\label{subsec:structgen}

The benchmark spans FCC metals (Ag~\cite{silver2002crc}, Au~\cite{gold2002crc}), corundum Fe$_2$O$_3$~\cite{finger1980fe2o3}, a layered transition-metal dichalcogenide MoS$_2$~\cite{wyckoff1963mos2, grau2002mos2}, rock-salt PbS~\cite{wyckoff1963pbs}, rutile-type oxides SnO$_2$~\cite{baur1971sno2} and TiO$_2$~\cite{horn1972tio2}, perovskites SrTiO$_3$~\cite{mitchell2000srtio3} and CH$_3$NH$_3$PbI$_3$~\cite{walsh2019hybrid}, and wurtzite ZnO~\cite{wyckoff1963zno}. RADII is built entirely from experimentally reported crystal structures, with the ten primitive unit cells extracted from published CIFs and used as input conditioning for all models. Deterministic spherical truncation produces idealized crystal-derived nanoparticle references containing 33--11{,}298 atoms; after orientation sampling the benchmark comprises ${\sim}$75{,}000 structures, partitioned into 48{,}000 training, 13{,}500 ID-test, and 13{,}480 OOD-test structures. Figure~\ref{fig:datasetOverall} summarizes benchmark structure and key statistics, and Appendix~\ref{app:dataset} reports per-material and per-radius statistics.

\paragraph*{Supercell construction.}
Let $N_{\mathrm{rep}}=60$. The periodic lattice of atomic sites is
\begin{equation}
\begin{aligned}
\mathcal{L}
=\Big\{\, &
n_1\mathbf{v}_1 + n_2\mathbf{v}_2 + n_3\mathbf{v}_3 + \mathbf{b}_j \\
&\Big|\;
n_1,n_2,n_3 \in \{0,\dots,N_{\mathrm{rep}}-1\},\;
j \in \{1,\dots,M\}
\Big\},
\end{aligned}
\label{eq:lattice}
\end{equation}
where $\mathbf{v}_1,\mathbf{v}_2,\mathbf{v}_3\in\mathbb{R}^3$ are the primitive lattice vectors and $\{\mathbf{b}_j\}_{j=1}^M$ are the basis positions.

\paragraph*{Scale-resolved nanoparticle.}
Selecting a central reference site $\mathbf{b}_0$ as the image of $\mathbf{b}_1$ in the central unit cell, the nanoparticle at radius $R$ is
\begin{equation}
\mathcal{S}(R) = \{\mathbf{x}\in\mathcal{L}\mid \|\mathbf{x}-\mathbf{b}_0\|_2\le R\},
\label{eq:structure}
\end{equation}
i.e., all lattice sites within a sphere of radius $R$. We sample $K=25$ radii uniformly:
\begin{equation}
R_k = R_{\min} + (k-1)\Delta R, \qquad k=1,\dots,25,
\end{equation}
with $R_{\min}=6$~\AA, $R_{\max}=30$~\AA, and $\Delta R=1$~\AA, yielding structures of 33--11,298 atoms.

\subsection{Radius Split Protocol}
\label{subsec:splits}

The 25 radii ($R \in \mathcal{R} = \{6, 7, \dots, 30\}$~\AA) are partitioned into three disjoint groups to cleanly separate interpolation from extrapolation. ID test radii $\mathcal{R}_{\mathrm{ID}} = \{11, 13, 15, 17, 19, 21\}$~\AA\ are held out from the training set but interleaved within the training radius range of 8--28~\AA, and are evaluated under unseen orientations to measure interpolation quality.

% ID test radii $\mathcal{R}_{\mathrm{ID}} = \{11, 13, 15, 17, 19, 21\}$~\AA\ are interleaved within train range of 8 to 28 but held out, and evaluated under unseen orientations to measure interpolation quality.
The four OOD test radii $\mathcal{R}_{\mathrm{OOD}} = \{6, 7, 29, 30\}$~\AA\ lie strictly outside the training range, below (6, 7~\AA) and above (29, 30~\AA), probing extrapolation in both the small-particle and large-particle regimes. Because atom count scales as $R^3$, these radius offsets correspond to large shifts in structural complexity: the training radii span 81--9{,}148 atoms, whereas the OOD radii span 33--11{,}298 atoms, about 59\% smaller than the smallest and 24\% larger than the largest training structure. The leakage-free guarantee is two-fold: (i)~no OOD or ID radius appears during training, and (ii)~ID and OOD test orientations are excluded from training orientations via the angular exclusion constraint described below.

\subsection{Quaternion-Based Orientation Sampling}
\label{subsec:rotations}

For each nanoparticle $\mathcal{S}(R)$, we generate rigidly rotated copies using unit quaternions to provide diverse input orientations for training and evaluation.

\paragraph*{Quaternion distance.}
Rotation is represented by $q=(q_x,q_y,q_z,q_w)$ $\in\mathbb{S}^3$ with canonical sign $q_w\ge 0$. The geodesic distance is
\begin{equation}
d(q_1,q_2) = 2\arccos\bigl(|\langle q_1,q_2\rangle|\bigr) \in [0,\pi].
\label{eq:quat_geodesic}
\end{equation}

\paragraph*{Greedy angular separation.}
Given spacing $\Delta\theta > 0$, we construct a set $\mathcal{Q}$ by rejection sampling from the Haar-uniform distribution on $\mathrm{SO}(3)$, accepting candidates sequentially subject to
\begin{equation}
\max_{q'\in\mathcal{Q}} |\langle q',q\rangle| \le \cos(\Delta\theta/2),
\label{eq:greedy_sep}
\end{equation}
guaranteeing $d(q_i,q_j) \ge \Delta\theta$ for all accepted pairs (Proposition~\ref{prop:angular_separation}).

\begin{proposition}[Angular separation guarantee]
\label{prop:angular_separation}
For any $q_i,q_j\in\mathcal{Q}$ returned by the greedy procedure at spacing $\Delta\theta$,
\begin{equation}
|\langle q_i,q_j\rangle| \le \cos(\Delta\theta/2) \;\;\Longleftrightarrow\;\; d(q_i,q_j) \ge \Delta\theta.
\end{equation}
\end{proposition}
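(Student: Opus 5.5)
The proof reduces to monotonicity of the geodesic distance in Eq.~\eqref{eq:quat_geodesic} as a function of $|\langle q_1,q_2\rangle|$. We first note that for unit quaternions, Cauchy--Schwarz gives $|\langle q_i,q_j\rangle|\in[0,1]$, so $\arccos$ is well defined there. We also use $\Delta\theta\in(0,\pi]$, so that $\Delta\theta/2\in(0,\pi/2]$ and $\cos(\Delta\theta/2)\in[0,1)$. If $\Delta\theta>\pi$ the condition is vacuous or empty and must be handled separately; we will assume the spacing lies in the meaningful range $(0,\pi]$, since $d\le\pi$ anyway.

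The core step is that $\arccos:[0,1]\to[0,\pi/2]$ is a strictly decreasing bijection, and $\cos:[0,\pi/2]\to[0,1]$ is its inverse, also strictly decreasing. With $c=|\langle q_i,q_j\rangle|$ and $t=\Delta\theta/2$, this gives
\[
c\le \cos t \iff \arccos c \ge \arccos(\cos t)=t .
\]
Multiplying by $2$ yields $d(q_i,q_j)=2\arccos c\ge\Delta\theta$. Both directions follow from the same strict monotonicity, so the equivalence is proved in one chain.

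It remains to show that the greedy procedure actually enforces the left-hand side for every accepted pair, which is the content of ``guaranteeing'' in Eq.~\eqref{eq:greedy_sep}. Take any two distinct elements and order them by acceptance time, say $q_i$ accepted before $q_j$. When $q_j$ was tested, $q_i$ was already in $\mathcal{Q}$. The acceptance criterion bounds the maximum over $\mathcal{Q}$, so in particular $|\langle q_i,q_j\rangle|\le\cos(\Delta\theta/2)$. The set only grows, so no later step invalidates this, and the symmetry of $|\langle\cdot,\cdot\rangle|$ covers the reverse ordering. Applying the equivalence then gives $d(q_i,q_j)\ge\Delta\theta$.

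The main subtlety is the antipodal ambiguity $q\sim -q$. The absolute value in both the distance and the criterion makes everything invariant under sign flips, so the canonical choice $q_w\ge0$ plays no role in the argument. The other subtlety is the range restriction on $\Delta\theta$ discussed above. Finally, the statement must be read for $i\ne j$: for $i=j$ the inner product is $1$ and the distance is $0$.
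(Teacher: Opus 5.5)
Your proposal is correct and follows the same route as the paper. The paper simply cites $d(q_i,q_j)=2\arccos(|\langle q_i,q_j\rangle|)$ together with the greedy acceptance rule; you make explicit the strict monotonicity of $\arccos$ on $[0,1]$ and the acceptance-order argument that the paper leaves implicit, and your caveats on $\Delta\theta\in(0,\pi]$ and $i\neq j$ are sound.
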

\noindent\textit{Proof.} Follows directly from $d(q_i,q_j)=2\arccos(|\langle q_i,q_j\rangle|)$ and the greedy acceptance rule. \hfill$\square$

\paragraph*{Split-specific construction and design rationale.}
RADII uses split-dependent angular spacings: $\Delta\theta_{\mathrm{train}}=16^\circ$, $\Delta\theta_{\mathrm{ID}}=14^\circ$, $\Delta\theta_{\mathrm{OOD}}=12^\circ$. The spacings decrease from train to test to provide progressively denser angular coverage for evaluation splits, ensuring thorough probing of orientation sensitivity at test time. The generation proceeds in two passes to enforce strict leakage prevention:

\textit{Pass~1 (Train + ID).} A single global quaternion grid $\mathcal{Q}_{\mathrm{train}}$ is generated once at spacing $\Delta\theta_{\mathrm{train}}$; per-structure subsets are drawn via deterministic seeding to ensure reproducibility. ID quaternions are then sampled subject to an exclusion constraint against $\mathcal{Q}_{\mathrm{train}}$:
\begin{equation}
\max_{q'\in\mathcal{Q}_{\mathrm{train}}} |\langle q',q\rangle| \le \cos(\delta_{\mathrm{ID}}/2),
\label{eq:exclusion_id}
\end{equation}
with exclusion margin $\delta_{\mathrm{ID}}=8^\circ$.

\textit{Pass~2 (OOD).} OOD quaternions are sampled subject to exclusion against the union $\mathcal{Q}_{\mathrm{train}} \cup \mathcal{Q}_{\mathrm{ID}}$ of all previously generated orientations:
\begin{equation}
\max_{q'\in\mathcal{Q}_{\mathrm{train}} \cup \mathcal{Q}_{\mathrm{ID}}} |\langle q',q\rangle| \le \cos(\delta_{\mathrm{OOD}}/2),
\label{eq:exclusion_ood}
\end{equation}
with stricter exclusion margin $\delta_{\mathrm{OOD}}=10^\circ$ to provide a wider angular buffer against training orientations, reflecting the stronger isolation required for out-of-distribution evaluation.

Fixed left-multiplication offsets ($q_{\mathrm{ID}}=\mathrm{Euler}_{xyz}(20^\circ,30^\circ,45^\circ)$, $q_{\mathrm{OOD}}=\mathrm{Euler}_{xyz}(50^\circ,70^\circ,90^\circ)$) shift each split's quaternion candidates into a distinct region of $\mathrm{SO}(3)$ prior to exclusion checking. These offsets are chosen to be well-separated from each other and from the identity (which seeds the training grid), ensuring that each split explores a geometrically distinct portion of orientation space. The two-pass architecture guarantees that OOD orientations are excluded against both training and ID orientations, providing a strictly stronger leakage-free guarantee than independent sampling.

Highly symmetric structures that map multiple rotations to identical coordinates are deduplicated by hashing rounded coordinates at tolerance $\varepsilon=10^{-6}$, retaining only unique orientations. After deduplication, the final dataset contains 74,980 structures in total, partitioned as 48,000 training structures, 13,500 ID test structures, and 13,480 OOD test structures.

\subsection{Evaluation Metrics}
\label{subsec:metrics}

RADII's metrics answer a central question: \textit{at what size does generation quality collapse, and what breaks first?} They are organized into three tiers: generation quality measures tracked per radius, failure decomposition diagnostics that localize errors within each structure, and frontier characterization metrics that quantify scaling ceilings.

\paragraph*{Correspondence guarantee.}
All metrics below rely on a one-to-one mapping between predicted and ground-truth atoms. As described in Section~\ref{subsec:task}, this correspondence is guaranteed by construction: the conditioning provides the exact atom count $N(R)$ and species sequence, and atom ordering is inherited from the input. Kabsch alignment therefore operates on paired, equal-sized, species-matched point sets without requiring permutation search.

\paragraph*{On assignment-free alternatives.}
Because the correspondence is guaranteed, assignment-free metrics such as Chamfer distance or earth mover's distance are not required for correctness. However, such metrics, along with radial distribution function (RDF) divergences, would provide complementary, correspondence-independent views of generation quality and could help cross-validate the failure sequences reported in Section~\ref{subsec:exp_sequence}. We discuss their inclusion in Section~\ref{sec:limitation}. The present metric suite is chosen to maximize diagnostic specificity: RMSD localizes global errors, BondMAE captures local chemical fidelity, and CoordCorr tracks topological preservation; aggregate distribution-level metrics would obscure these distinctions.

\subsubsection{Generation Quality Measures}
\label{subsubsec:quality}

After centering both prediction $P$ and ground truth $G$ and computing the optimal Kabsch rotation $\mathbf{R}^\star$, we define: $\mathrm{RMSD}(P,G)=\sqrt{\frac{1}{N}\|\tilde{P}-\tilde{G}\|_F^2},
$ where $\tilde{P}=\mathrm{center}(P)\mathbf{R}^\star$ and $\tilde{G}=\mathrm{center}(G)$. The one-to-one correspondence required by Kabsch alignment is guaranteed by the task conditioning: both $P$ and $G$ contain exactly $N(R)$ atoms with matching species, and ordering is preserved from the input.

\paragraph*{Local bond-length MAE}
Let $D_k(X)\in\mathbb{R}^{N\times k}$ be the $k$-nearest-neighbor distances via KD-tree. Flattening and sorting into vectors $d_P, d_G$:
\begin{equation}
\mathrm{BondMAE}_k(P,G)=\frac{1}{m}\sum_{i=1}^m |(d_P)_i-(d_G)_i|, \qquad m=\min(|d_P|,|d_G|).
\end{equation}
We note that this formulation compares globally sorted distance vectors rather than per-atom neighbor lists, which may conflate distinct local environments; per-atom kNN matching or bond-angle distributions could provide finer localization and are considered for future inclusion (Section~\ref{sec:limitation}). As a robustness check, a per-atom variant that compares each atom's neighbor list individually yields identical model rankings (Spearman $\rho = 1.0$; Appendix~\ref{app:sensitivity}), confirming that global sorting does not mislead the architectural comparison. Tracking BondMAE alongside RMSD nonetheless distinguishes models that lose local chemical order from those that maintain short-range structure but produce incorrect global morphology.

\subsubsection{Failure Decomposition Diagnostics}
\label{subsubsec:decomposition}

\paragraph*{Surface--interior error ratio.}
Let $S$ and $I$ index the outermost and innermost $25\%$ of atoms by distance from the centroid, where shell membership is defined on the \emph{ground-truth} structure and per-atom errors are computed from the Kabsch-aligned prediction using the guaranteed atom correspondence:
\begin{equation}
\mathrm{SurfIntRatio}=\frac{\mathrm{SurfRMSD}}{\mathrm{IntRMSD}+10^{-8}}.
\end{equation}
A ratio increasing with radius indicates boundary-driven collapse; a stable ratio signals uniform degradation. The $25\%$ shell fraction is a default; Appendix~\ref{app:sensitivity} shows the ID--OOD conclusions hold across fractions from $15\%$ to $30\%$.

\paragraph*{Coordination preservation.}
Using KD-tree ball queries with cutoff $r_c$ (default $r_c = 3.0$~\AA; cross-architecture rankings are stable for $r_c \in [3.0, 4.5]$~\AA, Appendix~\ref{app:sensitivity}), per-atom coordination numbers yield:
\begin{equation}
\mathrm{CoordCorr}(P,G)=\mathrm{corr}(c_P, c_G)\in[-1,1].
\end{equation}
A sharp drop at a specific radius signals the model has exceeded the scale at which it maintains local structural rules.

\subsubsection{Frontier Characterization}
\label{subsubsec:frontier}

These metrics operate on per-radius error profiles $m(R)$ rather than individual structures.

\paragraph*{ID--OOD degradation ratio.}
\begin{equation}
\mathrm{Degrad}(m)=\frac{\frac{1}{|\mathcal{R}_{\mathrm{OOD}}|}\sum_{R\in\mathcal{R}_{\mathrm{OOD}}} m(R)}{\frac{1}{|\mathcal{R}_{\mathrm{ID}}|}\sum_{R\in\mathcal{R}_{\mathrm{ID}}} m(R) + 10^{-8}}.
\end{equation}
Values near $1$ indicate robust scaling; values $\gg 1$ indicate failure to generalize beyond training sizes.

\paragraph*{Frontier radius.}
For quality threshold $\tau$: $r^\star(m, \tau) = \max\{R \in \mathcal{R} : m(R) \leq \tau\}.$ Comparing $r^\star$ across models, materials, and metrics provides a compact summary of each architecture's scaling ceiling.

\textbf{Reproducibility.} All construction is fully deterministic: a global seed with per-structure FNV-1a hashing of (material, radius, split) ensures identical outputs across runs. The released repository includes CIF-to-nanoparticle scripts, split configurations, quaternion generation code, and all evaluation implementations under the MIT license.

\begin{figure*}[ht]
    \centering
    \includegraphics[width=\textwidth]{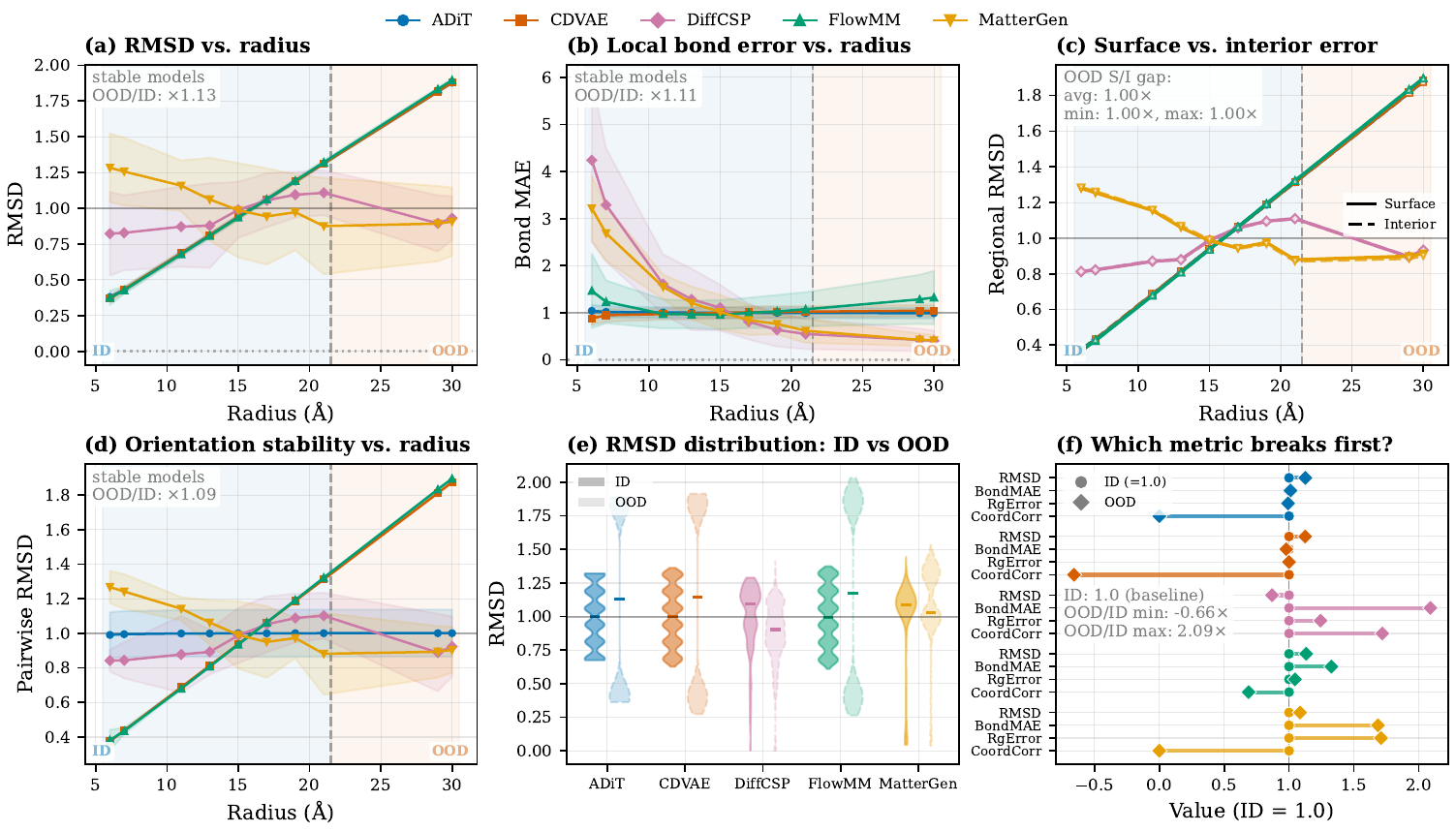}
    \Description{Frontier analysis}
    \caption{\textbf{Extrapolation frontier across multiple dimensions of structural fidelity.} \textbf{(a)} Global RMSD increases beyond the in-distribution boundary, revealing differing degrees of OOD degradation across models. \textbf{(b)} Local bond geometry errors diverge more strongly, showing that extrapolation behavior depends on the evaluation metric. \textbf{(c)} Surface atoms consistently exhibit higher errors than interior atoms, with similar trends from ID to OOD regimes, indicating uniform degradation. \textbf{(d)} Orientation consistency generally degrades alongside positional accuracy, though stability varies across architectures. \textbf{(e)} Distribution shifts in OOD samples show broader error tails compared to ID. \textbf{(f)} Multi-metric comparison highlights architecture-specific failure modes, with different models degrading along different structural dimensions. Shaded regions denote ID and OOD radii, and error bands indicate $\pm 1$ standard deviation across materials and seeds.}
    \label{fig:hero}
\end{figure*}

\section{Experiments}\label{sec:exp}

We evaluate five generative models (CDVAE \cite{cdvae}, DiffCSP \cite{jiao2023crystal}, FlowMM \cite{miller2024flowmm}, MatterGen \cite{zeni2023mattergen}, and ADiT \cite{joshi2025all}) on the unit-cell$\,\to\,$nanoparticle task (Eq.~\eqref{eq:task}). Each model receives the unit cell, target radius, and atom count as conditioning and generates exactly $N(R)$ atoms with the specified composition, ensuring one-to-one correspondence with the ground truth (Section~\ref{subsec:task}). The species sequence and atom ordering are communicated to each model as an explicit input tensor that fixes the identity and position index of every atom; during generation, models produce coordinates for each index in this fixed sequence, so the output is order-aligned with the ground truth by construction rather than by post-hoc matching. All metrics are computed per radius, averaged over all orientations at that radius. We report both \emph{raw} values (in~\AA) and \emph{normalized} values where each model's ID mean is set to~1.0 so that OOD values directly express relative degradation; we stress that normalized ratios are interpretable only for models whose raw ID performance is structurally meaningful (see Section~\ref{subsec:exp_profiles} for absolute-scale context). All experiments are repeated over three random seeds; we report means and $\pm 1$ standard deviation throughout. Seed-to-seed variability in the RMSD degradation ratio is small for the three well-behaved models (ADiT, CDVAE, FlowMM; std $< 0.02$); the divergent models show larger seed spread, itself a failure signal.

\paragraph*{Training protocol.}
All five architectures are trained under a unified protocol to ensure a fair comparison (Table~\ref{tab:training}). All models share the same optimizer, batch size, gradient clipping, learning rate scheduler, and training until plateau. Radius conditioning is provided as an explicit scalar input alongside the unit cell. Architecture-specific parameters (hidden dimensions, cutoff radii, diffusion/flow schedules) follow published specifications scaled to a unified budget of ${\sim}500$--$550$K parameters to isolate architectural differences from capacity effects. We acknowledge that this budget may disadvantage architectures designed for larger scales; Section~\ref{sec:limitation} discusses this trade-off, and Section~\ref{subsec:exp_capacity} probes it directly by rerunning the divergent models at their published parameter counts. At the largest radii ($R=30$~\AA, up to 11{,}298 atoms), peak GPU memory ranged from ${\sim}8$~GB (CDVAE) to ${\sim}18$~GB (ADiT); all models used the same neighbor-cutoff-based graph construction (Table~\ref{tab:training}) without additional sparsification.

\begin{table}[t]
\centering
\caption{Shared hyperparameters: Adam optimizer (lr $=10^{-4}$), batch size 2, gradient clip norm 1.0, ReduceLROnPlateau (factor 0.5, patience 5), and 3 seeds.}
\label{tab:training}
\resizebox{\columnwidth}{!}{%
\begin{tabular}{lccccc}
\toprule
& \textbf{ADiT} & \textbf{CDVAE} & \textbf{DiffCSP} & \textbf{FlowMM} & \textbf{MatterGen} \\
\midrule
Hidden dim & 24 & 92 & 120 & 120 & 130 \\
Num layers & 2 & 2 & 2 & 2 & 2 \\
Cutoff (\AA) & 5.0 & 5.0 & 7.0 & 5.0 & 5.0 \\
Latent dim & 8 & 92 & --- & --- & --- \\
Diffusion steps & 1000 & --- & 1000 & --- & 100 \\
$\beta$ range & $[10^{-4}, 0.02]$ & --- & $[10^{-4}, 0.02]$ & --- & --- \\
$\sigma$ range & --- & $[0.01, 1.0]$ & --- & $[10^{-4}, \text{---}]$ & --- \\
Num Gaussians & --- & 50 & 50 & 50 & 50 \\
\bottomrule
\end{tabular}%
}
\end{table}

\subsection{Where Is the Frontier?}
\label{subsec:exp_profiles}

Figure~\ref{fig:hero}(a) plots normalized RMSD versus radius. ADiT, CDVAE, and FlowMM show tightly clustered degradation ratios of 1.13, 1.12, and 1.13, corresponding to a consistent ${\sim}13\%$ RMSD increase from ID to OOD. DiffCSP’s normalized OOD RMSD decreases to 0.87, but this apparent robustness is misleading: its raw ID RMSD exceeds 3{,}386~\AA\ (${\sim}290$--$470\times$ larger than ADiT at 7.15~\AA\ or FlowMM at 11.55~\AA), indicating structurally incoherent outputs at all scales, with MatterGen’s 5{,}905~\AA\ RMSD placing it in the same regime. Their normalized ratios therefore reflect relative change atop already failed baselines. At the shared ${\sim}500$K-parameter budget these two samplers are numerically divergent rather than merely inaccurate. We verified these magnitudes through multiple diagnostics: consistent \AA\ units, valid Kabsch solutions ($\det(\mathbf{R}^\star)=+1$), broadly distributed per-atom errors rather than outliers, visual inspection showing globally disordered structures, and agreement from complementary metrics (DiffCSP: BondMAE $2.09\times$, RgError $1.24\times$; MatterGen: BondMAE $1.69\times$, RgError $1.71\times$). These models were originally designed for periodic crystals at smaller atom counts, suggesting a combination of task mismatch and parameter-budget constraints rather than a pipeline artifact. Figure~\ref{fig:hero}(b) shows a contrasting frontier under local bond fidelity: BondMAE degradation diverges substantially (DiffCSP $2.09\times$, MatterGen $1.69\times$, FlowMM $1.33\times$) while ADiT (1.01) and CDVAE (0.98) remain near unity, demonstrating that \textbf{the extrapolation frontier is metric-dependent}. Models appearing robust under global RMSD may simultaneously lose bond-length distributions required for chemical validity.

\subsection{Where Do Failures Originate?}
\label{subsec:exp_surface}

A natural hypothesis is that extrapolation failures concentrate on under-coordinated surface atoms. Figure~\ref{fig:hero}(c) partitions atoms into surface (outer 25\%) and interior (inner 25\%) shells defined on the \emph{ground-truth} structure, with per-atom errors computed from the aligned prediction. Across all five models, the surface--interior gap ratio remains remarkably stable from ID to OOD, with changes bounded by $\pm 0.0227$ at the default $25\%$ shell fraction ($\pm 0.0475$ across fractions from $15\%$ to $30\%$; Appendix~\ref{app:sensitivity}). This small shift rules out boundary-driven collapse: degradation propagates uniformly through the structure, indicating that the deficit lies in modeling how bulk structure extends with size rather than in surface geometry.

\subsection{Does Orientation Stability Transfer Across Scale?}
\label{subsec:exp_orientation}

By evaluating each structure under multiple input rotations, RADII isolates orientation sensitivity as a dimension independent of positional accuracy. Figure~\ref{fig:hero}(d) tracks orientation consistency across radii. CDVAE and FlowMM show orientation-consistency degradation ratios of 1.14 and 1.21; FlowMM's orientation degrades more steeply than its RMSD ($1.13\times$), so orientation and positional accuracy are not interchangeable frontiers. ADiT keeps its absolute orientation error negligibly small at every scale (${\sim}0.003$--$0.005$~\AA\ from ID to OOD), behaving as an effectively orientation-equivariant model even where positional accuracy degrades. Orientation consistency thus constitutes a \emph{secondary frontier}, architecturally decoupled from the primary quality frontier.

\subsection{How Are Errors Distributed?}
\label{subsec:exp_violin}

Beyond mean degradation, distributional analysis exposes worst-case risk. Figure~\ref{fig:hero}(e) shows full RMSD distributions under ID and OOD conditions. ADiT, CDVAE, and FlowMM exhibit matching tail ratios (Q95$_{\mathrm{OOD}}$/Q95$_{\mathrm{ID}}$) of 1.43, meaning worst-case OOD errors are 43\% worse than worst ID cases. FlowMM shows the largest median shift ($+0.176$ in normalized RMSD) and highest normalized OOD maximum ($2.03$), indicating occasional catastrophic failures at extrapolation scales. DiffCSP's distribution narrows (tail ratio 0.91), consistent with collapse into a scale-insensitive failure mode where all outputs are uniformly poor.

\subsection{Which Metric Breaks First?}
\label{subsec:exp_sequence}

Figure~\ref{fig:hero}(f) compares ID$\,\to\,$OOD degradation across four normalized metrics simultaneously. The failure sequences are qualitatively distinct: ADiT fails primarily on global RMSD ($1.13\times$) while local chemistry is preserved; FlowMM breaks on BondMAE ($1.33\times$) before RMSD; MatterGen suffers simultaneous RgError ($1.71\times$) and BondMAE ($1.69\times$) collapse; and DiffCSP is dominated by a BondMAE collapse ($2.09\times$), with coordination correlation falling to ${\approx}0$.
No two architectures exhibit the same sequence of metric failures, demonstrating that the frontier is a multi-dimensional surface shaped by model family.

To provide practitioner-ready comparisons, Table~\ref{tab:frontier} instantiates the frontier radius $r^\star(m,\tau)$ for both RMSD and BondMAE: ADiT reaches the farthest under RMSD ($r^\star=30$~\AA\ at $\tau=15$~\AA), FlowMM is the only model with a finite BondMAE frontier, and DiffCSP and MatterGen have no finite frontier at any threshold (full threshold sweep in Appendix~\ref{app:results}). Material dependence reinforces this: RMSD degradation is nearly material-invariant (std $= 0.005$, range 1.118--1.139), whereas BondMAE spread widens dramatically (std $= 0.208$, range 0.962--1.701), with tetragonal oxides TiO$_2$ and SnO$_2$ hardest and cubic metals easiest, correlating with unit-cell complexity.

\begin{table}[t]
\centering
\caption{Frontier radius $r^\star(m,\tau)$ in \AA: the largest radius at which model $m$ holds the metric within threshold $\tau$; ``--'' means no radius qualifies. Full threshold sweep in Appendix~\ref{app:results}.}
\label{tab:frontier}
\begin{tabular}{lcccc}
\toprule
& \multicolumn{3}{c}{RMSD $r^\star$ (\AA)} & BondMAE $r^\star$ \\
\cmidrule(lr){2-4}\cmidrule(lr){5-5}
\textbf{Model} & $\tau{=}5$ & $\tau{=}10$ & $\tau{=}15$ & $\tau{=}0.6$ \\
\midrule
ADiT      & 11 & 21 & 30 & -- \\
CDVAE     & -- & 7  & 13 & -- \\
DiffCSP   & -- & -- & -- & -- \\
FlowMM    & 7  & 13 & 19 & 21 \\
MatterGen & -- & -- & -- & -- \\
\bottomrule
\end{tabular}
\end{table}

\subsection{Scaling Laws for Nanostructure Generation}
\label{subsec:exp_scaling}

Figure~\ref{fig:scaling} fits power-law relationships RMSD~$\sim N^{\alpha}$ on ID radii. ADiT ($\alpha=0.334$, $R^2=1.000$), CDVAE ($0.335$, $1.000$), and FlowMM ($0.342$, $1.000$) exhibit nearly identical exponents near $1/3$, indicating RMSD grows with the nanoparticle’s linear dimension ($N^{1/3}\propto R$). An exponent of $1/3$ is the \emph{expected} geometric baseline for a well-behaved model rather than a surprising finding; its diagnostic value lies in conformity versus deviation. Geometrically, this reflects spatially uniform positional error accumulating with structure size, corresponding to systematic scaling rather than abrupt failure. DiffCSP ($\alpha=0.142$, $R^2=0.928$) and MatterGen ($\alpha=-0.126$, $R^2=0.897$) deviate strongly, with MatterGen’s negative exponent arising from fitting noise under uniformly large errors. OOD residuals further distinguish predictable from unstable scaling: ADiT (0.001) and FlowMM (0.004) maintain near-zero residuals, meaning ID scaling accurately predicts OOD degradation, whereas DiffCSP (0.118) and MatterGen (0.050) diverge substantially. For models in the $\alpha\approx1/3$ regime, performance at unseen sizes can therefore be estimated from ID fits alone. Appendix~\ref{app:results} reports the full scaling-law fits: the three well-behaved models follow a consistent RMSD power law while the divergent architectures exhibit unstable exponents. Seed-to-seed variability of $\alpha$ is below $0.002$ for the conforming models, indicating the scaling law is a stable architectural property for them, whereas DiffCSP and MatterGen are seed-unstable (std $> 0.09$), itself a failure signal.

\begin{figure}[t]
    \centering
    \includegraphics[width=0.95\columnwidth]{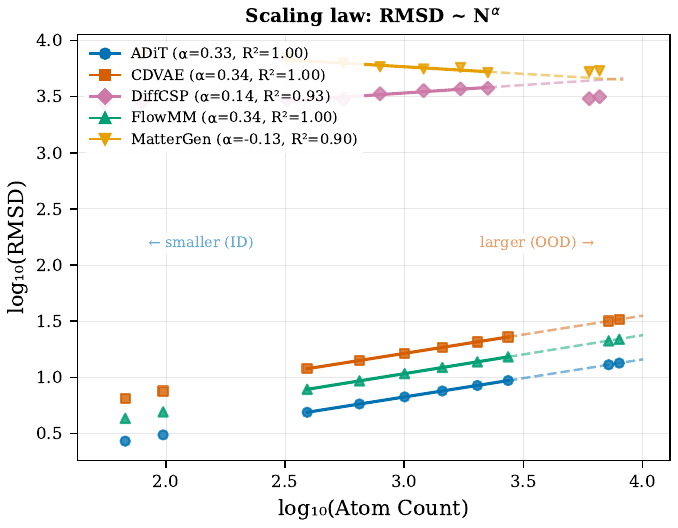}
    \Description{Scaling law}
    \caption{\textbf{Power-law scaling relationships quantify the extrapolation frontier.}
A log–log plot of RMSD versus atom count shows approximate power-law behavior, $\mathrm{RMSD} \sim N^\alpha$. Solid lines denote fits on in-distribution data, while dashed lines extrapolate into out-of-distribution regimes. Models with consistent scaling exhibit predictable extrapolation, whereas deviations indicate irregular or less predictable behavior beyond the training regime.}
    \label{fig:scaling}
\end{figure}

%─────────────────────────────────────────────────────────────────────────────
\subsection{Does Model Capacity Close the Frontier?}
\label{subsec:exp_capacity}

The shared ${\sim}500$K-parameter budget isolates architecture from capacity but may disadvantage models designed for larger scales. To separate the two effects, we retrained the two models that diverged at this budget (MatterGen and DiffCSP) at approximately their published parameter counts (MatterGen at 47M; DiffCSP at 12.4M), under the faster training settings detailed in Appendix~\ref{app:capacity}. The two outcomes differ. MatterGen at 47M recovers stable, bounded sampling (ID RMSD 17.63~\AA, OOD RMSD 19.81~\AA) where the 500K run had been numerically divergent; its shared-budget failure was therefore a capacity artifact. Yet the extrapolation frontier persists even at 47M: per-radius RMSD grows monotonically with $R$ and is largest at the farthest OOD radii. DiffCSP at 12.4M, by contrast, still diverges: reverse-diffusion sampling saturates the numerical safety bound despite a normally converging training loss, so capacity is not its bottleneck. Together these runs decompose the failure modes: capacity governs whether sampling is stable at all, whereas the size-extrapolation frontier is architectural. Finally, as a benchmark rather than a model, RADII's analog of an architecture ablation is the sensitivity analysis of its own design choices (Appendix~\ref{app:sensitivity}), not an ablation of model internals.

\section{Limitations and Ethical Considerations}\label{sec:limitation}
RADII evaluates generation on idealized nanoparticle geometries constructed by spherical truncation of bulk lattices without surface reconstruction, passivation, or finite-temperature relaxation, so performance reflects fidelity to ideal structures rather than experimentally realized surfaces, and both failure modes and success cases may change after structural relaxation. The current formulation conditions on atom count, species sequence, and ordering, isolating geometric extrapolation but narrowing generality; a natural extension is an unconditioned track evaluated with assignment-free metrics. The unified ${\sim}500$--$550$K parameter budget enables controlled architectural comparison but may disadvantage models designed for larger scales or different inductive biases, and several evaluated methods were originally developed for periodic crystals rather than large finite clusters. Metric design also introduces limitations: BondMAE$_k$ compares globally sorted neighbor distances and may conflate distinct local environments, while assignment-free measures such as Chamfer distance, earth mover's distance, or RDF divergences could provide complementary correspondence-independent validation. Finally, the ten selected materials span multiple bonding types and symmetries but do not exhaust inorganic chemistry, and the 6--30~\AA radius range probes only the small/medium-nanoparticle regime, leaving open whether the observed scaling behavior persists at larger sizes.

\paragraph*{Ethical considerations.}RADII comprises only computational structural data derived from published crystallographic references; it involves no human subjects, personal data, or dual-use concerns, and all code and benchmark data are released under the MIT license. Because the benchmark targets are idealized crystal-derived references, results should not be over-interpreted as direct validation for real-world nanomaterial deployment without additional physical relaxation or experimental verification.

\section{Conclusion}\label{sec:conc}

We introduced RADII, a radius-resolved benchmark that maps the extrapolation frontier of crystalline generative models across 25 size configurations, ten materials, and five architectures, revealing that well-behaved models degrade by ${\sim}13\%$ in global positional error beyond training radii while divergent models fail in absolute structural fidelity and local bond fidelity ranges from negligible degradation to more than 2× error growth, that no two architectures share the same failure sequence, that surface and interior errors grow in lockstep rather than from boundary effects, and that well-behaved models obey a power-law exponent $\alpha \approx 1/3$ whose in-distribution fit accurately predicts out-of-distribution error. These findings establish output scale as a first-class evaluation axis and reframe the extrapolation frontier as a diagnosable, forecastable quantity, though we recognize that the current benchmark evaluates a geometry-conditioned subtask on idealized references under a constrained parameter budget and that frontier locations may shift under more realistic conditions or larger model capacities. A first probe of the capacity question, rerunning the two divergent models at their published parameter counts, shows that greater capacity can stabilize sampling, as observed for MatterGen, but does not by itself close the size-extrapolation frontier; DiffCSP remains unstable even at published scale. Future work will incorporate DFT-relaxed and Wulff-shaped references, introduce an unconditioned evaluation track with assignment-free metrics, evaluate architectures at their intended scales alongside controlled-budget comparisons, explore size-conditioned training strategies to push the frontier outward, and investigate whether the scaling law and frontier-diagnostic framework transfer to proteins, biomolecular assemblies, and amorphous solids.
\section*{GenAI Disclosure}
Generative AI tools were used during the preparation of this manuscript to assist with editing prose, refining mathematical notation, and debugging code. All scientific contributions, including benchmark design, experimental methodology, data generation, model evaluation, and interpretation of results, were conceived and executed entirely by the authors. All AI-generated text was critically reviewed, verified, and revised by the authors, who take full responsibility for the content of this work.

\bibliographystyle{ACM-Reference-Format}
\bibliography{sample-base}

\appendix

\section{Dataset Details}\label{app:dataset}

RADII is constructed from ten primitive unit cells taken from published,
experimentally reported crystal structures. Table~\ref{tab:app_splits} summarizes the
leakage-free split, and Table~\ref{tab:app_atomcounts} reports the atom count of every
(material, radius) pair on the ID and OOD test radii. Atom count scales as $R^3$, so
the OOD radii correspond to structures well outside the training envelope: $33$ atoms
is $59\%$ smaller than the smallest training structure ($81$ atoms) and $11{,}298$
atoms is $24\%$ larger than the largest ($9{,}148$ atoms).

\begin{table}[H]
\centering
\footnotesize
\caption{Leakage-free radius split. Atom counts are over all materials.}
\label{tab:app_splits}
\begin{tabular}{lccc}
\toprule
\textbf{Split} & \textbf{Radii (\AA)} & \textbf{Structures} & \textbf{Atom range} \\
\midrule
Train & 15: $8$--$10,12,14,16,18,20,22$--$28$ & 48{,}000 & 81--9{,}148 \\
ID    & 6: $11,13,15,17,19,21$ & 13{,}500 & 203--3{,}858 \\
OOD   & 4: $6,7,29,30$          & 13{,}480 & 33--11{,}298 \\
\bottomrule
\end{tabular}
\end{table}

\begin{table}[H]
\centering
\footnotesize
\caption{Atom count per (material, radius) on the ID and OOD test radii.}
\label{tab:app_atomcounts}
\resizebox{\columnwidth}{!}{%
\begin{tabular}{lcccccccccc c}
\toprule
\textbf{Radius (\AA)} & \textbf{Ag} & \textbf{Au} & \textbf{CH$_3$NH$_3$PbI$_3$} & \textbf{Fe$_2$O$_3$} & \textbf{MoS$_2$} & \textbf{PbS} & \textbf{SnO$_2$} & \textbf{SrTiO$_3$} & \textbf{TiO$_2$} & \textbf{ZnO} & \textbf{Split} \\
\midrule
6  & 55   & 55   & 44   & 94    & 60   & 33   & 83   & 87   & 84    & 78   & OOD \\
7  & 79   & 79   & 63   & 136   & 84   & 57   & 105  & 119  & 132   & 117  & OOD \\
11 & 321  & 321  & 274  & 564   & 312  & 203  & 465  & 451  & 518   & 480  & ID  \\
13 & 555  & 555  & 402  & 914   & 522  & 365  & 777  & 781  & 862   & 747  & ID  \\
15 & 791  & 887  & 702  & 1400  & 792  & 515  & 1205 & 1227 & 1324  & 1185 & ID  \\
17 & 1205 & 1205 & 998  & 2032  & 1164 & 751  & 1717 & 1709 & 1908  & 1749 & ID  \\
19 & 1721 & 1721 & 1370 & 2862  & 1626 & 1045 & 2415 & 2387 & 2698  & 2415 & ID  \\
21 & 2243 & 2315 & 1886 & 3858  & 2178 & 1503 & 3249 & 3217 & 3614  & 3228 & ID  \\
29 & 5979 & 6051 & 4966 & 10180 & 5784 & 3887 & 8589 & 8537 & 9580  & 8565 & OOD \\
30 & 6603 & 6699 & 5486 & 11298 & 6402 & 4385 & 9457 & 9503 & 10558 & 9477 & OOD \\
\bottomrule
\end{tabular}%
}
\end{table}

\section{Full Quantitative Results}\label{app:results}

Table~\ref{tab:app_metrics} reports every metric on the ID and OOD splits (mean
$\pm$ standard deviation over three seeds), Table~\ref{tab:app_degrad} the ID$\to$OOD
degradation ratios, Table~\ref{tab:app_frontier} the full frontier-radius sweep, and
Table~\ref{tab:app_scaling} the power-law scaling fits. For DiffCSP and MatterGen the
$\sim$500K-budget runs are numerically divergent (Appendix~\ref{app:capacity}), so
their absolute magnitudes report the scale of divergent outputs rather than meaningful
structural error.

\begin{table}[H]
\centering
\footnotesize
\caption{Per-model metrics on ID and OOD splits (mean $\pm$ std, three seeds).
RMSD, BondMAE, RgError in \AA; CoordCorr and SurfIntRatio dimensionless.}
\label{tab:app_metrics}
\resizebox{\columnwidth}{!}{%
\begin{tabular}{lcccccccccc}
\toprule
& \multicolumn{2}{c}{\textbf{RMSD}} & \multicolumn{2}{c}{\textbf{BondMAE}} & \multicolumn{2}{c}{\textbf{RgError}} & \multicolumn{2}{c}{\textbf{CoordCorr}} & \multicolumn{2}{c}{\textbf{SurfIntRatio}} \\
\cmidrule(lr){2-3}\cmidrule(lr){4-5}\cmidrule(lr){6-7}\cmidrule(lr){8-9}\cmidrule(lr){10-11}
\textbf{Model} & ID & OOD & ID & OOD & ID & OOD & ID & OOD & ID & OOD \\
\midrule
ADiT      & $7.15{\pm}0.00$ & $8.05{\pm}0.00$ & $2.55{\pm}0.01$ & $2.58{\pm}0.01$ & $0.98{\pm}0.00$ & $0.98{\pm}0.00$ & $0.00$ & $0.00$ & $1.96{\pm}0.00$ & $1.96{\pm}0.00$ \\
CDVAE     & $17.46{\pm}0.00$ & $19.64{\pm}0.00$ & $3.39{\pm}0.00$ & $3.32{\pm}0.00$ & $1.24{\pm}0.00$ & $1.24{\pm}0.00$ & $0.00$ & $-0.00$ & $1.09{\pm}0.00$ & $1.09{\pm}0.00$ \\
DiffCSP   & $3386.5{\pm}134.2$ & $2944.6{\pm}430.0$ & $256.6{\pm}128.6$ & $536.4{\pm}125.6$ & $486.5{\pm}12.3$ & $604.6{\pm}66.7$ & $0.00$ & $0.00$ & $1.01{\pm}0.02$ & $1.03{\pm}0.03$ \\
FlowMM    & $11.55{\pm}0.02$ & $13.08{\pm}0.08$ & $0.55{\pm}0.04$ & $0.73{\pm}0.10$ & $0.28{\pm}0.00$ & $0.29{\pm}0.03$ & $0.02{\pm}0.00$ & $0.02{\pm}0.02$ & $1.24{\pm}0.00$ & $1.24{\pm}0.01$ \\
MatterGen & $5904.6{\pm}1175.8$ & $6408.6{\pm}873.4$ & $1751.6{\pm}328.9$ & $2953.3{\pm}313.7$ & $884.0{\pm}162.9$ & $1511.5{\pm}164.3$ & $0.00$ & $0.00$ & $1.01{\pm}0.00$ & $1.01{\pm}0.00$ \\
\bottomrule
\end{tabular}%
}
\end{table}

\begin{table}[H]
\centering
\footnotesize
\caption{ID$\to$OOD degradation ratios (OOD mean / ID mean). CoordCorr is omitted:
its ID and OOD values are $\approx 0$ for several models, making the ratio
uninformative (see absolute values in Table~\ref{tab:app_metrics}).}
\label{tab:app_degrad}
\begin{tabular}{lcccc}
\toprule
\textbf{Model} & \textbf{RMSD} & \textbf{BondMAE} & \textbf{RgError} & \textbf{SurfIntRatio} \\
\midrule
ADiT      & 1.13 & 1.01 & 0.99 & 1.00 \\
CDVAE     & 1.12 & 0.98 & 1.00 & 1.00 \\
DiffCSP   & 0.87 & 2.09 & 1.24 & 1.02 \\
FlowMM    & 1.13 & 1.33 & 1.05 & 1.00 \\
MatterGen & 1.09 & 1.69 & 1.71 & 1.00 \\
\bottomrule
\end{tabular}
\end{table}

\begin{table}[H]
\centering
\footnotesize
\caption{Frontier radius $r^\star(m,\tau)$ in \AA{} across thresholds. ``--'' = no
radius satisfies the threshold. ADiT extends farthest under RMSD; FlowMM is the only
model with a finite BondMAE frontier; DiffCSP and MatterGen have none.}
\label{tab:app_frontier}
\resizebox{\columnwidth}{!}{%
\begin{tabular}{lcccccc c cccccc}
\toprule
& \multicolumn{6}{c}{\textbf{RMSD threshold $\tau$ (\AA)}} & & \multicolumn{6}{c}{\textbf{BondMAE threshold $\tau$ (\AA)}} \\
\cmidrule(lr){2-7}\cmidrule(lr){9-14}
\textbf{Model} & 1.0 & 2.0 & 5.0 & 8.0 & 10.0 & 15.0 & & 0.3 & 0.5 & 0.6 & 0.7 & 0.8 & 1.0 \\
\midrule
ADiT      & -- & -- & 11 & 17 & 21 & 30 & & -- & -- & -- & -- & -- & -- \\
CDVAE     & -- & -- & -- & 7  & 7  & 13 & & -- & -- & -- & -- & -- & -- \\
DiffCSP   & -- & -- & -- & -- & -- & -- & & -- & -- & -- & -- & -- & -- \\
FlowMM    & -- & -- & 7  & 11 & 13 & 19 & & -- & -- & 21 & 21 & 30 & 30 \\
MatterGen & -- & -- & -- & -- & -- & -- & & -- & -- & -- & -- & -- & -- \\
\bottomrule
\end{tabular}%
}
\end{table}

\begin{table}[H]
\centering
\footnotesize
\caption{Power-law fit $\log_{10}\mathrm{RMSD} = \alpha\log_{10}N + c$ on ID radii.
OOD residual is the mean absolute prediction error when the ID fit is extrapolated to
OOD radii.}
\label{tab:app_scaling}
\begin{tabular}{lcccc}
\toprule
\textbf{Model} & \textbf{$\alpha$} & \textbf{$R^2$} & \textbf{Intercept} & \textbf{OOD residual} \\
\midrule
ADiT      & $0.334$  & $1.000$ & $-0.176$ & $0.001$ \\
CDVAE     & $0.335$  & $1.000$ & $0.209$  & $0.004$ \\
DiffCSP   & $0.142$  & $0.928$ & $3.107$  & $0.118$ \\
FlowMM    & $0.342$  & $1.000$ & $0.007$  & $0.004$ \\
MatterGen & $-0.126$ & $0.897$ & $4.145$  & $0.050$ \\
\bottomrule
\end{tabular}
\end{table}

\section{Metric Sensitivity Analysis}\label{app:sensitivity}

We sweep the two free parameters of the structural metrics. The CoordCorr cutoff
$r_c$ (Table~\ref{tab:app_coord}) yields cross-architecture rankings that are
identical to the default $r_c = 3.0$~\AA{} for all $r_c \in [3.0, 4.5]$~\AA{}
(Spearman $\rho = 1.0$) and nearly identical at $5.0$~\AA{} ($\rho = 0.9$); only the
very short cutoffs $2.0$--$2.5$~\AA{}, which capture partial first shells, reorder
models. The SurfIntRatio shell fraction (Table~\ref{tab:app_surfint}) leaves the
ID$\to$OOD shift small across $15$--$30\%$: at the default $25\%$ every model satisfies
$|\Delta| \le 0.0227$, and the maximum over all fractions is $0.0475$. Finally, a
per-atom BondMAE that compares each atom's neighbor list individually
(Table~\ref{tab:app_peratom}) reproduces the global-metric ranking exactly
(Spearman $\rho = 1.0$). Per-material breakdowns of all three sweeps are consistent
with these aggregates and are released with the benchmark repository.

\begin{table}[H]
\centering
\footnotesize
\caption{CoordCorr cutoff sweep: Spearman $\rho$ of the cross-architecture ranking
against the default $r_c = 3.0$~\AA.}
\label{tab:app_coord}
\begin{tabular}{lccccccc}
\toprule
\textbf{$r_c$ (\AA)} & 2.0 & 2.5 & 3.0 & 3.5 & 4.0 & 4.5 & 5.0 \\
\midrule
\textbf{Spearman $\rho$} & $-0.20$ & $-0.10$ & $1.00$ & $1.00$ & $1.00$ & $1.00$ & $0.90$ \\
\bottomrule
\end{tabular}
\end{table}

\begin{table}[H]
\centering
\footnotesize
\caption{SurfIntRatio ID$\to$OOD shift $\Delta = \mathrm{OOD}-\mathrm{ID}$ across
shell fractions. The default fraction is $25\%$.}
\label{tab:app_surfint}
\begin{tabular}{lccccc}
\toprule
\textbf{Fraction} & \textbf{ADiT} & \textbf{CDVAE} & \textbf{DiffCSP} & \textbf{FlowMM} & \textbf{MatterGen} \\
\midrule
0.15 & $+0.0463$ & $+0.0001$ & $+0.0475$ & $-0.0019$ & $+0.0023$ \\
0.20 & $+0.0379$ & $-0.0012$ & $+0.0276$ & $-0.0017$ & $+0.0032$ \\
0.25 & $+0.0012$ & $-0.0025$ & $+0.0227$ & $-0.0023$ & $+0.0037$ \\
0.30 & $-0.0066$ & $-0.0025$ & $+0.0146$ & $-0.0033$ & $+0.0031$ \\
\bottomrule
\end{tabular}
\end{table}

\begin{table}[H]
\centering
\footnotesize
\caption{Per-atom vs.\ global BondMAE (\AA). The per-atom variant compares each
atom's neighbor list individually; model ranking is unchanged (Spearman $\rho = 1.0$).}
\label{tab:app_peratom}
\begin{tabular}{lcccc}
\toprule
& \multicolumn{2}{c}{\textbf{Per-atom}} & \multicolumn{2}{c}{\textbf{Global}} \\
\cmidrule(lr){2-3}\cmidrule(lr){4-5}
\textbf{Model} & ID & OOD & ID & OOD \\
\midrule
ADiT      & 2.55   & 2.58   & 2.55   & 2.58   \\
CDVAE     & 3.40   & 3.33   & 3.39   & 3.32   \\
DiffCSP   & 256.60 & 536.35 & 256.60 & 536.35 \\
FlowMM    & 0.74   & 0.91   & 0.55   & 0.73   \\
MatterGen & 1751.58 & 2953.33 & 1751.58 & 2953.33 \\
\bottomrule
\end{tabular}
\end{table}

\section{Published-Scale Capacity Experiment}\label{app:capacity}

To separate capacity limits from architectural ones, the two models that diverged at
the shared $\sim$500K budget were retrained near their published parameter counts:
MatterGen at 47M parameters (hidden dimension 1100, 5 layers) and DiffCSP at 12.4M
parameters (hidden dimension 465, 6 layers). To keep these larger runs tractable they
used faster training settings than the main benchmark: half the training data
($\mathrm{loaded\_frac}=0.5$) and 50 epochs, with the evaluation pipeline otherwise
unchanged; the numbers here are therefore an indicative capacity probe rather than a
capacity-matched re-benchmark.

Table~\ref{tab:app_capacity} contrasts the two outcomes. MatterGen at 47M recovers
stable, bounded sampling: per-radius RMSD grows monotonically with $R$ (from
$12.05$~\AA{} at $R{=}11$ to $23.19$~\AA{} at $R{=}21$) and peaks at the farthest OOD
radii ($33.19$~\AA{} at $R{=}30$), so the extrapolation frontier remains visible even
at published scale. Per-material RMSD is uniform (ID $17.56$--$17.66$~\AA,
OOD $19.74$--$19.84$~\AA), and predictions are bounded within $\pm113$~\AA{} (ID) and
$\pm169$~\AA{} (OOD). DiffCSP at 12.4M, in contrast, still diverges: reverse-diffusion
sampling saturates the $\pm10{,}000$~\AA{} numerical safety bound for all ten materials
across both splits, even though training loss converges normally (final validation
$\approx 1.10$). Its per-radius RMSD shows no monotonic size trend, with error largest
at the smallest radii, consistent with lattice-diffusion runaway rather than a
size-generalization frontier. Capacity therefore governs whether sampling is stable at
all (MatterGen) but does not, by itself, resolve the failure for every architecture
(DiffCSP).

\begin{table}[H]
\centering
\footnotesize
\caption{Per-radius RMSD (\AA) at published scale. MatterGen-47M is stable and grows
monotonically with $R$; DiffCSP-12.4M remains bound-clipped and divergent, so its
values report the magnitude of divergent outputs, not structural error.}
\label{tab:app_capacity}
\begin{tabular}{cccc}
\toprule
\textbf{Radius (\AA)} & \textbf{Split} & \textbf{MatterGen-47M} & \textbf{DiffCSP-12.4M} \\
\midrule
6  & OOD & 6.44  & 2{,}314.9 \\
7  & OOD & 7.54  & 2{,}236.1 \\
11 & ID  & 12.05 & 1{,}688.9 \\
13 & ID  & 14.28 & 1{,}708.0 \\
15 & ID  & 16.53 & 1{,}834.3 \\
17 & ID  & 18.74 & 2{,}060.9 \\
19 & ID  & 20.97 & 1{,}954.9 \\
21 & ID  & 23.19 & 1{,}781.1 \\
29 & OOD & 32.07 & 1{,}669.2 \\
30 & OOD & 33.19 & 1{,}655.1 \\
\midrule
\multicolumn{2}{c}{\textbf{Overall} (ID / OOD)} & 17.63 / 19.81 & 1{,}838.0 / 1{,}968.8 \\
\bottomrule
\end{tabular}
\end{table}

\end{document}